\documentclass[10pt, conference, compsocconf]{IEEEtran}

\usepackage{makeidx}
\usepackage{graphicx,comment}
\usepackage{amssymb}
\usepackage{enumerate}
\usepackage{multirow}
\usepackage{amsmath}
\usepackage{amsthm}
\usepackage{tabularx}
\usepackage{dsfont}
\usepackage{extarrows}
\usepackage{epstopdf}
\usepackage[]{algorithm2e}
\usepackage{xcolor}
\usepackage{fixltx2e}  % allows for indexgeneration
\usepackage{picinpar} %ͼÎÄÈÆÅÅ%
\usepackage{booktabs,threeparttable}
\usepackage{mdwlist}

% correct bad hyphenation here
\hyphenation{op-tical net-works semi-conduc-tor}

\begin{document}
%
% paper title
% can use linebreaks \\ within to get better formatting as desired
\title{Modelling and Analysis of Network Security\\
- an Algebraic Approach}

\author{\IEEEauthorblockN{Qian Zhang}
\IEEEauthorblockA{Institute of Software \\Chinese Academy of Sciences, CAS\\
Beijing China\\
Email:zhangq@ios.ac.cn}
\and
\IEEEauthorblockN{Ying Jiang}
\IEEEauthorblockA{Institute of Software \\Chinese Academy of Sciences, CAS\\
Beijing China\\
Email:jy@ios.ac.cn}
\and
\IEEEauthorblockN{Peng Wu}
\IEEEauthorblockA{Institute of Software \\Chinese Academy of Sciences, CAS\\
Beijing China\\
Email:wp@ios.ac.cn}}
% make the title area
\maketitle

\begin{abstract}
Game theory has been applied to investigate network security. But different security scenarios were often modeled via different types of games and analyzed in an ad-hoc manner. In this paper, we propose an algebraic approach for modeling and analyzing uniformly several types of network security games.
This approach is based on a probabilistic extension of the value-passing Calculus of Communicating Systems (CCS), which is a common formal language for modeling concurrent systems. Our approach gives a uniform security model for different security scenarios.
We present then a uniform algorithm for computing the Nash equilibria strategies on this security model.
In a nutshell, the algorithm first generates a network state transition graph for our security model, then simplifies this transition graph through graph-theoretic abstraction and bisimulation minimization.
Then, a backward induction method, which is only applicable to finite tree models, can be used to compute all the Nash equilibria strategies of the (possibly infinite) security models.
This algorithm is implemented and can be tuned smoothly for computing its social optimal strategies, and its termination and correctness are proved.
The effectiveness and efficiency of this approach are demonstrated with two detailed examples from the field of network security.
\end{abstract}

\begin{IEEEkeywords}
Network security; Nash equilibria strategies; Formal method; Probabilistic value-passing CCS

\end{IEEEkeywords}

\IEEEpeerreviewmaketitle

\section{Introduction}
As the Internet has become ubiquitous, the risk posed by network attacks has greatly increased. 
Generally, network security scenarios can be classified into two main categories: one in which defenders have full understanding of malicious levels of users via white list or black list, and the other one in which defenders have no accurate knowledge of the users's types.
How to devise effective defense mechanisms against various attacks is a fundamental research area.
A Nash Equilibrium Strategy (NES) \cite{david}\cite{rd} defines a relative optimal defense mechanism, where neither attackers nor defenders are willing to change their current offensive-defensive behaviors.

In recent two decades, game-theoretic approaches have been applied to investigate network security \cite{xiannuan}\cite{sankardas}\cite{syverson}.
To name a few, complete information games \cite{martin}, in which each player knows the types, strategies and payoffs of all the other players, can be applied to modelling the security scenarios in which defenders know the users' types \cite{nguyen}\cite{klye}\cite{ls}\cite{jean}\cite{xiaolin}.
While the incomplete information games can be used to model the scenarios in which defenders have no idea of users' type  \cite{Harsanyi}\cite{patcha}\cite{monireh}\cite{nguyen2}\cite{karel}\cite{chaozhang}.
However, specific game models are only suitable for analyzing NESs under specific security scenarios.
How to find NESs for different security scenarios with a uniform framework is far from having been solved.

CCS is a common formal language for modeling concurrent systems.
It can describe interactive behaviors vividly up to its interleaving semantics.
Inspired by the generative model for probabilistic CCS \cite{rob}, we propose a generative probabilistic extension for the value-passing CCS (PVCCS\textsubscript{G} for short), and then a uniform security model based on PVCCS\textsubscript{G} is put forward.
We are the first, to our knowledge, to present a uniform framework for analyzing the NESs for various network security scenarios.

For a network security scenario with one user (a legitimate user or an attacker) and one defender as participants, as the defender does not know the user's type which means the user's maliciousness, we introduce another virtual participant ``Nature" to perform the Harsanyi transformation \cite{rd}, i.e., to convert nondeterministic choices under uncertain user types to quantitative choices of risk conditions.
Our approach interprets the network security scenario as a state transition system.
The states depend on the behaviors of the participants.
% and characterized by PVCCS\textsubscript{G} processes.
The state transitions depend on the interactions among the participants.
% and ruled by PVCCS\textsubscript{G} semantics.
We then present a uniform algorithm to compute all the NESs for different network security scenarios automatically.
Firstly, we minimize the PVCCS\textsubscript{G} based security model up to probabilistic bisimularity, which is a well-defined technique in process calculi. In this way, the semantically equivalent states can be unified as single ones.
Then we abstract the minimized model in a graph-theoretic manner.
The abstracted model is then converted to a finite hierarchical graph by Tarjan's algorithm \cite{reinhard} to increase reusability and parallelization.
Finally, we compute the NESs backward inductively in the hierarchical graph.
We take two different security scenarios from \cite{harkeerat} and \cite{klye} for case studies. The experimental results are rather promising in terms of the effectiveness and flexibility of our approach.
%In the first example, we propose a PVCCS\textsubscript{G} model for DDoS attacks under the security scenarios where the defender cannot distinguish zombie flows from legitimate flows.
%A rational allocation of firewall filtering parameters is produced by our approach.
%The second example is a campus network with 18 states in which the defender understands the user's maliciousness. Two instructive NESs are computed out by our approach.

The major contributions of our work are as follows.
\begin{itemize}
\item
We propose a uniform framework based on PVCCS\textsubscript{G} to characterize the security scenarios modeled via complete or incomplete information games, which general game-theoretic approaches cannot support yet.
\item
We minimize the PVCCS\textsubscript{G} based security model by probabilistic bisimularity and abstract the minimized model by graph theoretic methods. It reduces the state space and makes our model to be scalable.
\item
We propose a uniform algorithm to compute out the NESs for various security scenarios automatically.
The efficiency of the algorithm benefits from high reusability, parallelization and the minimized model.
\item
We filter out an invalid NES from the results obtained by classical game-theoretic approach \cite{klye}. It is an incredible threat \cite{martin} which is a NES but will never happen in the real situation if the players are rational.
\end{itemize}

The rest of the paper is organised as follows.
We establish a generative probabilistic extension of the value-passing CCS (PVCCS\textsubscript{G}) and construct a PVCCS\textsubscript{G} based security model (Section 2);
give the formal definition of NES in this model and present the algorithm (Section 3);
illustrate the efficiency of our method by two security scenarios (Section 4);
finally, discuss the conclusions (Section 5).

\section{Modelling based on PVCCS\textsubscript{G}}
\newtheorem{definition}{Definition}[section]
\newtheorem{Theorem}{Theorem}[section]
\newtheorem{Lemma}{Lemma}[section]

\subsection{ PVCCS\textsubscript{G}}
Inspired by the generative model for probabilistic CCS \cite{rob}, we propose a generative model for probabilistic value-passing CCS (PVCCS\textsubscript{G}).
\paragraph{Syntax}
Let $\mathbf{\mathcal{A}}$ be a set of channel names, and $a$ range over $\mathbf{\mathcal{A}}$, and $\mathbf{\mathcal{\overline{A}}}$ be the set of co-names, i.e., $\mathbf{\mathcal{\overline{A}}}=\{\overline{a}\mid a\in \mathbf{\mathcal{A}}\}$.
Let $Label=\mathcal{A} \cup \mathcal{\overline{A}}$, ${\it Var}$ be a set of value variables, and $x$ range over ${\it Var}$.   $\mathit{Val}$ is a value set, and $v$ range over $\mathit{Val}$.
$\sf e$ and $\sf b$ denote a value expression and a boolean expression, respectively.
Let $Act$ be a set of actions, and $\alpha$ range over $Act$. $Act=\{a(x)\mid a\in \mathcal{A}\}\cup \{\overline{a}(\sf e)\mid \overline{\it a}\in \mathcal{\overline{A}}\}\cup \{\tau \}$, where $\tau$ is the invisible action, $a(x)$ and $\overline{a}(\sf e)$ denote an input prefix action and an output prefix action, respectively.
%Let $\mathbf{\mathcal{K}}$ be a set of process identifiers.
%and $\mathbf{\mathcal{X}}$.and a set of process variables, respectively.
%Each process identifier $A\in \mathcal{K}$ is assigned an arity, a non-negative integer representing the number of parameters which it takes.

Let $\mathbf{Pr}\textsubscript{G}$ be the set of processes in PVCCS\textsubscript{G}. Each process expression $E$ is defined inductively as follows:
\begin{equation*}
\begin{aligned}
E::=&Nil\mid \alpha.E \mid\underset{i\in I}{\sum}[p_{\textit{i}}]E_{\textit{i}}\mid E_1|E_2\mid E\backslash R\mid E[f]\mid \\ &\textrm{{\bf if}}~{\sf b}~{\rm {\bf then}}~E_1~{\rm {\bf else}}~E_2
\mid {\it A(x)}\\
\alpha::=&a(x)\mid \overline{a}(\sf e)
\end{aligned}
\end{equation*}
$Nil$ is the empty process which does nothing.
$\alpha.E$ is a prefixing process which evolves to $E$ by performing $\alpha$.
$\underset{i\in I}{\sum}[p_{\textit{i}}]E_{\textit{i}}$ is a probabilistic choice process which means $E_{\textit{i}}$ will be chosen with probability $p_{\textit{i}}$, where $I$ is an index set, and for $\forall i\in I$, $p_{\textit{i}}\in (0,1]$, $\underset{i\in I}{\dot{\sum}}p_{\textit{i}}=1$. $\sum$ and $\dot{\sum}$ are summation notations for processes and real numbers, respectively.
$E_{1}|E_{2}$ represents the combined behavior of $E_{1}$ and $E_{2}$ in parallel.
$E\backslash R$ is a process with channel restriction, whose behavior is like that of $E$ as long as $E$ does not perform any action with channel $a\in R\cup \overline{R}$, $R\subseteq \mathcal{A}$.
$E[f]$ means relabeling the channels of process $E$ as indicated by $f$, where $f:Label\rightarrow Label$ is a relabeling function.
${\rm {\bf if}}~{\sf b}~{\rm {\bf then}}~E_1~{\rm {\bf else}}~E_2$ is a conditional process which enacts $E_{1}$ if ${\sf b}$ is $true$, else $E_{2}$.
Each process constant $A(x)$ is defined recursively as $A(x)\stackrel{\textit{def}}{=}E$, where $E$ contains no process variables and no free value variables except $x$.

\paragraph{Semantics}
The semantics of PVCCS\textsubscript{G} are defined in Table \ref{operationalsemanticspvccsG}. $E\stackrel{\alpha[p]}{\rightarrow}E'$ means that, by performing an action $\alpha$, $E$ will evolve to $E'$ with probability $p$.
%By convention, if $E_{1}\stackrel{\alpha[p]}{\rightarrow}E_{2}$ and $E_{2}\stackrel{\beta[q]}{\rightarrow}E_{3}$, then we use $E_{1}\stackrel{\alpha[p]\beta[q]}{\Longrightarrow}E_{3}$ to represent a  multi-step transition from $E_1$ to $E_3$.
Let ${\sf chan}:Act\rightarrow \mathcal{A}$, i.e., ${\sf chan}(a(x))={\sf chan}(\overline{a}({\sf e}))=a$. $E\{{\sf e}/x\}$ means substituting with ${\sf e}$ for every free occurrences of $x$ in process $E$.
Let $\nu(E,R)=\dot{\sum}\{p_{\textit{i}}\mid E\stackrel{\alpha[p_{\textit{i}}]}{\longrightarrow}E_{\textit{i}}, ~{\sf chan}(\alpha)\notin R\}$ and $\wp$ be the powerset operator.
$\mathbf{Pr\textsubscript{G}}/\mathcal{R}$ denotes the set of equivalence classes induced by an equivalence relation $\mathcal{R}$ over $\mathbf{Pr\textsubscript{G}}$.
\begin{definition}
Let $\mu:(\mathbf{Pr\textsubscript{\rm G}}\times Act\times \wp(\mathbf{Pr\textsubscript{\rm G}}))\rightarrow[0,1]$ is a total function given by: $\forall \alpha\in Act$, $\forall E\in \mathbf{Pr\textsubscript{\rm G}}$, $\forall C\subseteq \mathbf{Pr\textsubscript{\rm G}}$, $\mu(E,\alpha,C)=\dot{\sum}\{p|E\stackrel{\alpha[p]}{\longrightarrow}E',~E'\in C\}$. \end{definition}

\begin{definition}
An equivalence relation $\mathcal{R}\subseteq \mathbf{Pr}\textsubscript{\rm G} \times \mathbf{Pr}\textsubscript{\rm G}$ is a \textit{probabilistic bisimulation} if $(E, E')\in \mathcal{R}$ implies: $\forall C\in \mathbf{Pr}\textsubscript{\rm G}/\mathcal{R}$, $\forall\alpha\in Act$, $\mu(E, \alpha, C)=\mu(E', \alpha, C)$.
\end{definition}

$E$ and $E'$ are probabilistic bisimilar, written as $E\sim E'$, if there exists a probabilistic bisimulation $\mathcal{R}$ s.t. $E\mathcal{R}E'$.
\begin{table*}
\centering
\renewcommand{\arraystretch}{1.5}
\caption{\label{operationalsemanticspvccsG}Operational semantics of PVCCS\textsubscript{G}}
\begin{tabular}{ll}\hline
$[{\rm In}]\frac{}{a(x).E\stackrel{a({\sf e})[1]} {\longrightarrow }E\{{\sf e}/x\}}$&
$[{\rm Out}]\frac{}{\overline{a}({\sf e}).E \stackrel{\overline{a}({\sf e})[1]}{\longrightarrow}E} $\\
$[{\rm Sum}] \frac{E_{\textit{i}}\stackrel{\alpha[q]}{\longrightarrow}E_{\textit{i}}^{'} }{\underset{i\in I}{\sum}[p_{\textit{i}}]E_{\textit{i}}\stackrel{\alpha[p_{\textit{i}}\cdot q]}{\longrightarrow}E_{\textit{i}}^{'}} $& $[{\rm Res}] \frac{E\stackrel{\alpha[p]}{\longrightarrow}E'}{E\backslash R\stackrel{\alpha[p/r]}{\longrightarrow}E'\backslash R} ~~~({\sf chan}(\alpha)\notin R, r=\nu(E,R)) $\\
$[{\rm {Par_l}}] \frac{E_1\stackrel{\alpha[p]}{\longrightarrow}E_1^{'}}{E_1|E_2\stackrel{\alpha [p]}{\longrightarrow}E_1^{'}|E_2}$&
$[{\rm Par_r}] \frac{E_2\stackrel{\alpha[p]}{\longrightarrow}E_2^{'}}{E_1|E_2\stackrel{\alpha [p]}{\longrightarrow}E_1|E_2^{'}} $ ~~~~~
$[{\rm Rel}] \frac{E\stackrel{\alpha[p]}{\longrightarrow}E'}{E[f]\stackrel{f(\alpha)[p]}{\longrightarrow}E'[f]}$\\
$[{\rm Com}]\frac{E_1\stackrel{a({\sf e})[p]}{\longrightarrow}E_1^{'},~ E_2\stackrel {\overline{a}({\sf e})[q]}{\longrightarrow}E_2^{'}}{E_1|E_2\stackrel{\tau[p\cdot q]}{\longrightarrow}E_1^{'}|E_2^{'}}$ & $[{\rm Con}]\frac{E\{{\sf e}/x\}\stackrel{\alpha[p]}{\longrightarrow} E'}{A({\sf e})\stackrel{\alpha[p]}{\longrightarrow}E'}~~(A(x)\stackrel{\textit{def}} {=}E)$\\
$[{\rm if_t}]\frac{E_{1}\stackrel{\alpha[p]}{\longrightarrow}E_{1}^{'}} {{\rm {\bf if}}~{\sf b}~{\rm {\bf then}}~E_1~{\rm {\bf else}}~E_2\stackrel{\alpha[p]} {\longrightarrow}E_{1}^{'}}({\sf b}=true)  $ & $ [{\rm if_f}]\frac{E_{2}\stackrel{\alpha[p]}{\longrightarrow}E_{2}^{'}} {{\rm {\bf if}}~{\sf b}~{\rm {\bf then}}~E_1~{\rm {\bf else}}~E_2\stackrel{\alpha[p]} {\longrightarrow}E_{2}^{'}}({\sf b}=\textit{false})  $\\
\hline
\end{tabular}
\end{table*}
\subsection{PVCCS\textsubscript{G} based Security Model}
A network system can be abstracted as four participants: the Nature, one user, one defender and the network environment which is the hardware and software services of the network under consideration.
To construct the PVCCS\textsubscript{G} based security model, the following aspects are addressed.
\begin{enumerate}[1]
\item $\textit{Ty}$: the type set of the user, and $t$ range over ${\it Ty}$.
\item $S$ : the set of network states, and $s$ range over $S$.
\item $A^{u}$ and $A^{d}$: the action sets of the user and the defender, respectively. Let $A^{u}=\underset{s\in S,t\in \textit{Ty}}{\cup}A^{u}(s,t)$ and $A^{d}=\underset{s\in S}{\cup}A^{d}(s)$, where $A^{u}(s,t)$ and $A^{d}(s)$ are the action sets of the user with type $t$ and the defender at state $s$, respectively.
\item ${\sf \dot{p}}$: state transition probability function. Let ${\sf \dot{p}}:S\times \textit{Ty}\times A^{u}\times A^{d}\times S\rightarrow [0,1]$.
\item ${\sf \dot{f}}^u$ and ${\sf \dot{f}}^d$: the immediate payoff functions for the user and the defender, respectively. Let ${\sf \dot{f}}^u: S\times \textit{Ty}\times A^{u}\times A^{d}\rightarrow \mathds{R}$, ${\sf \dot{f}}^d: S\times \textit{Ty}\times A^{u}\times A^{d}\rightarrow \mathds{R}$, where $\mathds{R}$ is the real number.
\end{enumerate}

The PVCCS\textsubscript{G} based security model represents the network as a state transition system.
The processes in PVCCS\textsubscript{G} represent all possible behaviors of the participants at each state, and each state is assigned with a process depicting all possible interactions of the participants. Technically, let $\mathcal{A}=\{{\sf Aces}$, ${\sf Defd}$, ${\sf Tell}_u,$ ${\sf Tell}_d\}$ and $Label=\mathcal{A}\cup \overline{\mathcal{A}}\cup \{{\sf \overline{Log}}\}\cup \{{\sf \overline{Rec}}\}$.
${\it Val}=A^u\cup A^d\cup H\cup \textit{Ty}$, where $H \subseteq \mathds{R}\times\mathds{R}$.
$Act= Act^u\cup Act^d\cup Act^n$, where $Act^u$, $Act^d$ and $Act^n$ are the behavior sets of the user, the defender and the network environment, respectively.
\begin{equation*}
\footnotesize
\begin{aligned}
Act^u=&\{{\sf \overline{Aces}}(u)\mid u\in A^u \}\cup\{{\sf Tell}_u(x)\mid x\in {\it Var}\}\\
Act^d=&\{\overline{{\sf Defd}}(v)\mid v\in A^d \}\cup \{{\sf Tell}_d(x)\mid x\in {\it Var}\}\\
Act^n=&\{{\sf Aces}(x)\mid x\in {\it Var} \}\cup \{{\sf Defd}(x)\mid x\in {\it Var} \}\\
&\cup \{\overline{{\sf Tell}_u}(x)\mid x\in {\it Var} \}
\cup \{\overline{{\sf Tell}_d}(x)\mid x\in {\it Var}\}\\
&\cup \{\overline{{\sf Log}}(x,y)\mid x\in {\it Var}, y\in {\it Var}\}\cup \{\overline{{\sf Rec}}(r)\mid r\in H\}\\
\end{aligned}
\end{equation*}

The processes $G_{\it i}$,  $\textit{pU}_{i}$, $\textit{pD}_{i}$ and $\textit{pN}_{i}$, separately depicting all possible behaviors of the Nature, the user, the defender and the network environment at state $s_{\it i}$, are defined as follows.

\begin{footnotesize}
\begin{align*}
\textit{G}_{i}&\stackrel{\textit{def}}{=}\underset{t\in \textit{Ty}}{\sum} [q]N_{i}(t),~~~N_{\textit{i}}(t)\stackrel{\textit{def}}{=}(\textit{pU}_i(t)|
\textit{pD}_i|\textit{pN}_i(t)) \backslash R\\
\textit{pU}_{i}(t)&\stackrel{\textit{def}}{=}\underset{u\in A^u(s_{\textit{i}},t)}{\sum}\overline{{\sf Aces}}(u).{\sf Tell}_u(y).Nil        \\
\textit{pD}_{i}&\stackrel{\textit{def}}{=}~{\sf Tell}_d(x).\underset{v\in A^d(s_{\textit{i}})}{\sum}\overline{{\sf Defd}}(v).Nil\\
\textit{pN}_{i}(t)&\stackrel{\textit{def}}{=}   ~{\sf Aces}(x).\overline{{\sf Tell}_d}(x).{\sf Defd}(y).\overline{{\sf Tell}_u}(y). Tr_{\textit{i}}(x,y,t)\\
Tr_{\textit{i}}(x,y,t)&\stackrel{\textit{def}}{=}\underset{u\in A^u(s_{\textit{i}},t), v\in A^d(s_{\textit{i}})}{\sum}{\rm {\bf if}}~(x=u,y=v)~ {\rm {\bf then}}\\
&\overline{{\sf Log}}(u,v).\overline{{\sf Rec}}(r^u, r^d).\underset{j\in I}{\sum}[p_{\it ij}]G_{j} ~{\rm {\bf else}}~Nil
\end{align*}
\end{footnotesize}
where $R=
\{{\sf Aces},{\sf Defd},{\sf Tell}_{\it u},{\sf Tell}_{\it d}\}$, $q$ is the probability distribution of type $t$, $p_{\it ij}={\sf \dot{p}}(s_{\textit{i}},t,u,v,s_{\textit{j}})$,
$r^u={\sf \dot{f}}^u(s_{\it i},t,u,v)$, $r^d={\sf \dot{f}}^d(s_{\it i},t,u,v)$.

$G_{\it i}$ means that at state $s_{\it i}$, the Nature presumes the type $t$ with probability $q$.
$N_{\it i}(t)$ means the defender will interact with the user with type $t$ at $s_{\it i}$.
$pU_{\it i}(t)$ means the type $t$ user launches an access request $u$ ($\overline{{\sf Aces}}(u)$), and waits for the responses from the network environment (${\sf Tell}_{\it u}(y)$).
$pD_{\it i}$ means the defender captures some  potential attacks happened (${\sf Tell}_{\it d}(x)$), and then sends a defense instruction $v$ to the network environment ($\overline{{\sf Defd}}(v)$).
$pN_{\it i}(t)$ means the network environment receives an access request from the type $t$ user (${\sf Aces}(x)$), and informs the defender of the request from the user ($\overline{{\sf Tell}_d}(x)$), after receiving a defense instruction from the defender (${\sf Defd}(y)$), the network environment will reply the user with the defensive information ($\overline{{\sf Tell}_u}(y)$).
At last, the network environment generates a log file to record the interaction ($\overline{{\sf Log}}(x,y)$) and evaluate the payoffs for the user and the defender caused by this interaction ($\overline{{\sf Rec}}(r^u, r^d)$), finally the network system evolves to another state with probability $p_{\it ij}$.
Based on process transition rules, we obtain the network state transition graph caused by offensive-defensive interactions.
\subsection{SecModel}
To keep the realistic states,
we abstract the state transition graph via path contraction \cite{reinhard} to a labeled graph named as SecModel.
The vertex set is $V=\underset{s_{\it i}\in S}{\cup}\{G_{\it i}\}$, $G_{\it i}$ is the process assigned to state $s_{\it i}$.
The edge set of $G_{\it i}$ is $E(G_{\it i})$, ranged over by $e_{\it ij}=(G_{\it i},G_{\it j})$ if $G_{\it i}\stackrel{\tau^4[q]}{\rightarrow}\cdot\stackrel{\overline{{\sf Log}}(u,v)}{\rightarrow}\cdot\stackrel{\overline{{\sf Rec}}(r^u,r^d)[p_{\it ij}]}{\rightarrow}G_{\it j}$, and $u\in A^u(s_{\it i},t)$.
The label of edge $e_{\it ij}$ is $L(e_{\textit{ij}})$=$(L_{\textit{Type}}(e_{\textit{ij}})$ , $L_{\textit{TypePr}}(e_{\textit{ij}})$, $L_{\textit{Act}}(e_{\textit{ij}})$,  $L_{\textit{TranP}}(e_{\textit{ij}})$, $L_{\textit{WeiP}}(e_{\it ij}))$. $L_{\textit{Type}}(e_{\textit{ij}})=t$ is the user's type, $L_{\textit{TypePr}}(e_{\textit{ij}})=q$ is type's probability distribution, $L_{\textit{Act}}(e_{\textit{ij}})=(u,v)$ is offensive-defensive action, $L_{\textit{TranP}}(e_{\textit{ij}})=p_{\it ij}$ is the transition probability and $L_{\textit{WeiP}}(e_{\it ij})=(r^u,r^d)$ is the weight pair of this interaction. Later, superscript $u$ and $d$ distinguish the value for the user and the defender.

\section{Analyzing NES on SecModel}
\subsection{Nash Equilibrium Strategy on SecModel}
\begin{definition}
A \textit{t-execution} of $G_{\textit{i}}$ in {\rm SecModel}, denoted as $\pi_{\textit{i}}^{t}$, is a walk (vertices and edges appearing alternately) starting from $G_{\textit{i}}$ and ending with a cycle, on which every vertex's out-degree is 1 and each edge $e$ has label $L_{\textit{Type}}(e)=t$.
\end{definition}

$\pi_{\textit{i}}^{t}[j]$ denotes the subsequence of $\pi_{\textit{i}}^{t}$ starting from $G_{\textit{j}}$ if $G_{\textit{j}}$ is a vertex on $\pi_{\textit{i}}^{t}$.

\begin{definition}
The \textit{payoffs} of the user and the defender on execution $\pi_{\textit{i}}^t$, denoted by $PF^u(\pi_{\textit{i}}^t)$ and $PF^d(\pi_{\textit{i}}^t)$, respectively, are defined as follows:
\begin{align*}
PF^u(\pi_{\textit{i}}^t)&=L^u_{\textit{WeiP}}(e_{\textit{ij}}) +\beta\cdot L_{\textit{TranP}}(e_{\textit{ij}})\cdot PF^u(\pi_{\textit{i}}^t[j])\\
PF^d(\pi_{\textit{i}}^t)&=L^d_{\textit{WeiP}}(e_{\textit{ij}}) +\beta\cdot L_{\textit{TranP}}(e_{\textit{ij}})\cdot PF^d(\pi_{\textit{i}}^t[j])
\end{align*}
where $\beta\in(0,1)$ is a discount factor, $e_{\it ij}=(G_{\it i},G_{\it j})$.
\end{definition}

\begin{definition}
$\pi_{\textit{i}}^t$ is a \textit{t-Nash Equilibrium Execution (t-{\rm NEE})} of $G_{\textit{i}}$ if it satisfies:
\begin{align*}
PF^u(\pi_{\textit{i}}^t)&=\underset{e_{\textit{ij}}\in E(G_{\it i}) }{\max}\{L_{\textit{WeiP}}^u({e_{\textit{ij}}})+ \beta\cdot L_{\textit{TranP}}(e_{\textit{ij}})\cdot PF^u(\pi_{\textit{j}}^t)\} \\
PF^d(\pi_{\textit{i}}^t)&=\underset{e_{\textit{ij}}\in h_{\it i}^t(e)}{\max}\{L_{\textit{WeiP}}^d({e_{\textit{ij}}})+ \beta\cdot L_{\textit{TranP}}(e_{\textit{ij}})\cdot PF^d(\pi_{\textit{j}}^t) \}
\end{align*}
where $e$ is the first edge of $\pi_{\textit{i}}^t$,
$h_{\it i}^t(e)=\{e'\in E(G_{\it i})\mid L_{\it Act}^{u}(e')=L_{\it Act}^{u}(e), L_{\it Type}(e')=L_{\it Type}(e)=t\}$,
%$E'=\underset{e\in E(G_{\it i})}{\cup}\{e'\mid e'= \arg\underset{e''\in h_{\it i}^t(e)}{\max}L_{\textit{WeiP}}^d(e'')\}$.
$\pi_{\textit{j}}^t$ is the \textit{t-{\rm NEE}} of $G_{\textit{j}}$. It is defined coinductively {\rm \cite{davide07}}.
\end{definition}

\begin{definition}
\textit{Strategy} is a spanning subgraph of {\rm SecModel} satisfying:
 \begin{itemize}
 \item for any $e$, $e'\in E(G_{\it i})$, $G_{\it i}\in V$, $L_{\it Type}(e)\neq L_{\it Type}(e')$;
 \item $\underset{e\in E(G_{\it i})}{\bigcup}L_{\it Type}(e)={\it Ty}$.
 \end{itemize}
\end{definition}

\begin{definition}
\textit{Nash Equilibrium Strategy} ({\rm NES}) is a strategy in which $\forall G_{\textit{i}}$, $\forall t\in {\it Ty}$, any $\textit{t-execution}$ of $G_{\textit{i}}$ is its $\textit{t-{\rm NEE}}$.
\end{definition}

\subsection{Algorithm}
The algorithm we proposed to compute NES on SecModel, denoted as \verb"FindNES()", works as follows:
\begin{enumerate}[1]
\item minimize the network state transition graph by probabilistic bisimulation (function \verb"Minimization"());
\item abstract the minimized graph via path contraction to SecModel (function \verb"Abstraction"() );
\item compute the defender's expected payoff up to his belief on the user's type (function \verb"BayExp"());
\item stratify the model via Tarjan's strongly connected component algorithm \cite{reinhard} and compute the NESs backward inductively (function \verb"AlgNES"()).
\end{enumerate}

Assume the maximum out-degree of each vertex is $M$, $n$ and $m$ denote the size of vertex set and edge set of SecModel, respectively, the complexity of \verb"FindNES()" is $O(n\times M^2+n+m)$.

\verb"Minimization"():
the input is the network state transition graph and the output is the minimized graph. It is a recursive function and works as follows:
\begin{enumerate}[1]
\item for any $G_{\it i}$, $G_{\it j}$, if $\forall t\in {\it Ty}$, $e=(G_{\it i},G_{\it i}')$, $L_{\it Type}(e)=t$, $\exists e'\in E(G_{\it j})$,  $e'=(G_{\it j},G_{\it j}')$, $L(e)=L(e')$ componentwise, label ($G_{\it i}$, $G_{\it j}$) with {\rm Bisim}. \verb"Minimization"($G_{\it i}'$, $G_{\it j}'$);
\item else we label ($G_{\it i}$, $G_{\it j}$) with {\rm NonBisim}, return ${\it false}$;
\item if ($G_{\it i}'$, $G_{\it j}'$) has label {\rm Bisim}, return ${\it true}$.
\end{enumerate}

\verb"Abstraction"():
its input is the minimized graph and the output is SecModel. It works as follows:
\begin{enumerate}[1]
\item pick any two paths of some $G_{\it i},G_{\it j}\in V$, respectively;
\item if they are vertex independent which means they have no common internal vertex, contract each path as a single edge between the endpoints.
    Keep the values transferred by this multi-transition as the edge label;
\item else then these paths are kept intact.
\end{enumerate}

\verb"BayExp"():
the input is $E(G_{\it i})$ of any $G_{\it i}$, and the output is $E(G_{\it i})$ with modified $L_{\textit{WeiP}}^d(e)$.
For any $e\in E(G_{\it i})$, if $L_{\textit{Act}}(e)=(u,v)$, $L_{\textit{type}}(e)=t$, $L_{\textit{TypePr}}(e)=q$, we use Bayesian rule to update $L_{\textit{WeiP}}^d(e)=\underset{t\in {\it Ty}}{\sum}\delta(t\mid u)\cdot f^d(s_{\textit{i}},t,u,v)$, where
$\delta(t\mid u)=\frac{\mathsf{p}(u\mid t)\cdot q} {\underset{t'\in \textit{Ty}}{\sum}\mathsf{p}(u\mid t')\cdot \mathsf{p}(t')}$
where $\mathsf{p}(t')=L_{\textit{TypePr}}(e')$ if  $L_{\textit{Type}}(e')=t'$. $\mathsf{p}(u\mid t)$ is the probability that action $u$ is observed given the user's type $t$.

\verb"AlgNES"():
the input is the minimized SecModel modified by \verb"BayExp"(), and the output are the NESs of SecModel. It works as follows:
\begin{enumerate}[1]
\item stratify the minimized SecModel to an acyclic graph by viewing each SCC as a cluster vertex. $\textit{Leave}$ denotes the one with zero out-degree. $\textit{NonLeave}$ denotes others.
\item find the NESs for all \textit{Leaves} in parallel. The key point of finding NES for each $\textit{Leave}$ is, $\forall t\in \textit{Ty}$, to find a \textit{t-cycle} in this $\textit{Leave}$ which is a $\textit{t-{\rm NEE}}$ of every vertex on it. \textit{t-cycle} is a cycle whose edge $e$ has label $L_{\textit{Type}}(e)=t$.
\item compute NES for any \textit{NonLeave} backward inductively. It follows the method of finite dynamic games for NES \cite{martin}.
\end{enumerate}

The core of \verb"AlgNES"() is how to find NES for SCC. Let $D$ denote each SCC, and $G_{\it i}\in \mathcal{V}(D)$ if $G_{\it i}$ belongs to $D$. It is a value iteration process.
The value function, named as \verb"LocNs"(), is to select some edge $e$ of $G_{\textit{i}}$, for all $G_{\it i}\in \mathcal{V}(D)$, satisfying Nash Equilibrium condition.
The iterated function, named as \verb"RefN"(), is to update the weight pair for each edge of $G_{\textit{i}}$.
We use $L_0(e)$ records the weight pair updated by \verb"BayExp"() and $L_{n}(e)$ is the updated weight pair of $e$ on the nth iteration. A variable vector $Pp_{\it n}(G_{\it i})=[Pp_{\it n}^t(G_{\it i})]_{t \in \textit{Ty}}$ saves the weight pair of $e$, if $e$ is the result of \verb"LocNs()" on the nth iteration, that is $Pp_{\it n}^t(G_{\it i})=L_n(e)$ with $L_{\it Type}(e)=t$.
The value iteration process will terminate if the weight pair value of each edge is unchanged.

In \verb"LocNs()", given a type $t$, edge $e\in E(G_{\it i})$ satisfies Nash Equilibrium condition on the nth iteration ($n\geq 0$) if $e=\arg\underset{e'\in h_{\it i}^t(e)}{\max}L_{\it n}^d(e')$ and $e= \arg\underset{e'\in E(G_{\it i})}{\max}L_{\it n}^u(e')$.

In \verb"RefN()", on the nth iteration ($n\geq 1$), the weight pair of each edge $e\in E(G_{\it i})$ with type $t$ is updated by:
$L_{\it n}(e)=L_{0}(e)+\beta \cdot L_{\textit{TranP}}(e)\cdot Pp_{\textit{n-1}}^t(G_{\textit{j}})$, where $e=(G_{\it i},G_{\it j})$.

\subsubsection{Termination and Correctness}
We need to prove the termination of \verb"AlgNES"().
Inspired by a technique in dynamic programming \cite{vander}\cite{ls}, on the kth iteration, the value function can be formalized as a mapping $\sigma: V\rightarrow \mathds{R}\times \mathds{R}$, $\sigma_{k}(G_{\textit{i}})=Pp_{k}(G_{\textit{i}})$; the iteration function defines a set of vertex $\{G_{\textit{i}}(\sigma_{k-1})\mid G_{\textit{i}}(\sigma_{k-1})$ denotes $G_{\textit{i}}$ with $e_{\textit{ij}}$ whose weight pair is $L_{\it k}(e_{\textit{ij}})=L_{0}(e_{\textit{ij}})+ \beta\cdot L_{\textit{TranP}}(e_{\textit{ij}})\cdot \sigma_{k-1}(G_{\textit{j}})\}$.
Then we have $\sigma_{k+1}(G_{\textit{i}})=Pp_{1}(G_{i}(\sigma_{k}))$. We define a shorthand notation $(T\sigma)(G_{\textit{i}})=Pp_{1}(G_{i}(\sigma))$, that is $T\sigma_{k}=\sigma_{k+1}$. We need to prove $T$ is a contraction.
\begin{Lemma}
\label{provenesterminatedpre}
For any $G_{\textit{i}}\in V$, we have
$$\mid \sigma_{k}(G_{\textit{i}})^u- T\sigma_{k}(G_{\textit{i}})^u\mid~ \leq \underset{e\in E(G_{\textit{i}})}{\max} \mid L_{\it k}^u(e)- L_{\it k+1}^u(e)\mid$$
$$\mid \sigma_{k}(G_{\textit{i}})^d- T\sigma_{k}(G_{\textit{i}})^d\mid ~\leq \underset{e\in E(G_{\textit{i}})}{\max} \mid L_{\it k}^d(e)- L_{\it k+1}^d(e)\mid$$
\end{Lemma}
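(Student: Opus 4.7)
The plan is to reduce both inequalities to the standard max-contraction identity: for any finite $X$ and any real-valued $\phi,\psi$ on $X$,
\[
\Bigl|\max_{x\in X}\phi(x)-\max_{x\in X}\psi(x)\Bigr|\;\le\;\max_{x\in X}\bigl|\phi(x)-\psi(x)\bigr|,
\]
whose one-line proof is: assuming WLOG $\max\phi\ge\max\psi$ and letting $x^{\star}=\arg\max\phi$, we get $\max\phi-\max\psi\le\phi(x^{\star})-\psi(x^{\star})\le\max_x|\phi-\psi|$. All that remains is to rewrite $\sigma_k(G_i)$ and $T\sigma_k(G_i)$ as maxima of the appropriate form and invoke this identity.

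For the $u$-component the reduction is immediate: the NE condition enforced by \verb"LocNs()" demands that the selected edge $e_k$ on the $k$th iteration satisfies $e_k=\arg\max_{e\in E(G_i)} L_k^u(e)$, hence $\sigma_k(G_i)^u=L_k^u(e_k)=\max_{e\in E(G_i)} L_k^u(e)$. The same reasoning applied at iteration $k+1$ gives $T\sigma_k(G_i)^u=\max_{e\in E(G_i)} L_{k+1}^u(e)$, and the max-contraction identity with $\phi=L_k^u$, $\psi=L_{k+1}^u$, $X=E(G_i)$ yields the first inequality directly.

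For the $d$-component the defender's NE condition gives only $\sigma_k(G_i)^d=L_k^d(e_k)=\max_{e\in h_i^t(e_k)} L_k^d(e)$, a maximum over the \emph{restricted} set $h_i^t(e_k)$, and similarly at iteration $k+1$ with $h_i^t(e_{k+1})$. I would split into two cases. (i) If $e_k$ and $e_{k+1}$ share the same user action, then $h_i^t(e_k)=h_i^t(e_{k+1})$, and max-contraction on this common subset of $E(G_i)$ gives the bound, which is only weakened by enlarging the max to $E(G_i)$ on the right. (ii) If the user's NE-selected action changes between iterations, I would telescope via $L_{k+1}^d(e_k)$: the term $|L_k^d(e_k)-L_{k+1}^d(e_k)|$ is directly dominated by $\max_e|L_k^d(e)-L_{k+1}^d(e)|$, and the residual piece $L_{k+1}^d(e_k)-L_{k+1}^d(e_{k+1})$ must be bounded by combining the defender's best-response optimality on $h_i^t(e_{k+1})$ with the user's global-max optimality of $e_k$ and $e_{k+1}$ at the two iterations.

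The step I expect to be the main obstacle is case (ii): bridging the two different feasible sets $h_i^t(e_k)$ and $h_i^t(e_{k+1})$ when the NE-selected user action drifts between iterations, since neither of the straightforward decompositions leaves a residual piece of determinate sign without additional argument. The fact that the right-hand side of the lemma is a maximum over the full edge set $E(G_i)$ rather than over the restricted set $h_i^t$ is precisely the slack needed to absorb this drift, and exploiting this slack cleanly through the simultaneous NE optimality of $e_k$ and $e_{k+1}$ is the key technical task.
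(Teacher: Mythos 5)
Your treatment of the $u$-component is correct and is in fact cleaner than what the paper does: since \texttt{LocNs()} selects at each iteration an edge that is a global maximizer of $L^u_n$ over $E(G_{\textit{i}})$ (within the fixed type), you get $\sigma_{k}(G_{\textit{i}})^u=\max_{e}L_{\it k}^u(e)$ and $T\sigma_{k}(G_{\textit{i}})^u=\max_{e}L_{\it k+1}^u(e)$, and the standard max-contraction identity finishes that half. The paper instead proves the whole lemma by contradiction via a case analysis (its cases 1 and 2.1--2.4) on whether the selected user action and the optimality conditions change between iterations.

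The genuine gap is exactly where you flag it: case (ii) of your $d$-argument. The slack you hope to exploit --- that the right-hand side is a maximum over all of $E(G_{\textit{i}})$ rather than over $h_{\it i}^t$ --- is not sufficient, and no argument built only from the NE optimality conditions can close this case, because the $d$-inequality is false without an additional hypothesis. Take two user actions with singleton $h$-classes, edges $e_1,e_2$ of the same type, with $L_{\it k}^u(e_1)=10$, $L_{\it k}^u(e_2)=9$, $L_{\it k+1}^u(e_1)=9$, $L_{\it k+1}^u(e_2)=10$, and $L_{\it k}^d=L_{\it k+1}^d$ with $L^d(e_1)=0$, $L^d(e_2)=100$. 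The selected edge drifts from $e_1$ to $e_2$, so the left-hand side of the $d$-inequality equals $100$ while the right-hand side equals $0$. What rules this out in the paper is an extra assumption stated at the top of its proof (and nowhere in the lemma): within a type class, if one edge gives the user a strictly larger payoff than another, it gives the defender a strictly smaller one. Under that anti-monotone coupling the configuration above is impossible (at iteration $k+1$ it would force $L_{\it k+1}^d(e_2)<L_{\it k+1}^d(e_1)$, contradicting $100>0$), and a bound on the $u$-drift can be transferred to a bound on the $d$-drift. Any complete proof of the $d$-half must import this, or some equivalent structural assumption linking $L^u$ and $L^d$ across edges; your plan, which tries to get by on the simultaneous optimality of $e_k$ and $e_{k+1}$ alone, cannot succeed as written.
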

\begin{proof}
We prove it by contradiction. Assuming without loss of generality, for any $e,e'\in E(G_{\it i})$ with $L_{{\textit Type}}(e)=L_{{\textit Type}}(e')$, if $L_{{\textit WeiP}}^u(e)>L_{{\textit WeiP}}^d(e')$, then $L_{{\textit WeiP}}^d(e)<L_{{\textit WeiP}}^d(e')$. This assumption follows the reality: if the user is an attacker, then the more the damages he causes, the more time the defender spends to normalizing the network; if the user is a regular user, the more requests he sends, the more effort the defender takes to balance the load. Let $\sigma_{k}(G_{\textit{i}})=(L^u_{\it k}(e_{1}), L^d_{\it k}(e_{1}))$ and $T\sigma_{k}(G_{\textit{i}})=(L^u_{\it k+1}(e_{2}), L^d_{\it k+1}(e_{2}))$, where $e_1$, $e_2\in E(G_{\textit{i}})$.
Let $L^u_{\it k}(e_{1})=a$, $L^u_{\it k}(e_{2})=b$, $L^u_{\it k+1}(e_{1})=a'$ and $L^u_{\it k+1}(e_{2})=b'$, where $a$, $a'$, $b$, $b'$ are positive number. We prove the first inequality. Similar to the second one. \\
\textbf{case 1}: $L_{\textit{Act}}^u(e_{1})=L_{\textit{Act}}^u(e_{2})$\\
According to the Nash Equilibrium condition, we have $a<b$ and $b'<a'$. If the first inequality in the lemma does not hold, then we have $\mid a-b'\mid >\mid a-a'\mid$ and $\mid a-b' \mid>\mid b-b'\mid$, then we get $(b'-a')(b'+a')>2a(b'-a')$ and $(a-b)(a+b)>2b'(a-b)$ which deduce $a-b>a'-b'$, contradiction.\\
\textbf{case 2}: $L_{\textit{Act}}^u(e_{1})\neq L_{\textit{Act}}^u(e_{2})$.
Let's define two conditions:
Cond 1: $Pp_{n}(G_{\it i})=L_{\it n}(e_{2})$;
Cond 2: $Pp_{n+1}(G_{\it i})=L_{\it n+1}(e_{1})$\\
\textbf{$\textit{case 2.1}$}: both Cond 1 and Cond 2\\
Then $a>b$ and $b'>a'$. If $\mid a-b'\mid >\mid a-a'\mid$ and $\mid a-b' \mid>\mid b-b'\mid$, then $b-a>b'-a'$, contradiction. \\
\textbf{$\textit{case 2.2}$}: not Cond 2 but Cond 1 \\
Then $\exists e'$ with $L_{Act}^u(e')=L_{Act}^u(e_{1})$. Let $L_{\it k}^u(e')=c$, $L_{\it k+1}^u(e')=c'$, then $c>a>b$, $a'>c'$, $b'>c'$. If $\mid a-b'\mid >\mid a-a'\mid$ and $\mid a-b' \mid>\mid b-b'\mid$, then $(b'-a')(b'+a')>2a(b'-a')$, $a+b>2b'$. If $b'>a'>c'$, contradiction; If $c'<b'<a'$, then $2b'<b'+a'<2a$,  $2c>a+b>2b'>2c'$, so $b'<a$ and $c>c'$. If $\mid a-b' \mid>\mid c-c'\mid$, then $a-c>b'-c'$; If $b'=a'$, contradiction. \\
\textbf{$\textit{case 2.3}$}: not Cond 1 but Cond 2\\
%Then we have $\exists e'$ with $L_{Act}^u(e')=L_{Act}^u(e_{2})$.
Proof is similar to \textbf{$\textit{case 2.2}$}.\\
\textbf{$\textit{case 2.4}$}: neither Cond 1 nor Cond 2\\
Then $\exists e', e''$, $L_{Act}^u(e')=L_{Act}^u(e_{1})$, $L_{Act}^u(e'')=L_{Act}^u(e_{2})$. Let $L^u_{\it k}(e')=c$, $L^u_{\it k+1}(e')=c'$, $L^u_{\it k}(e'')=d$, $L^u_{\it k+1}(e'')=d'$, then $d<a<c$, $d<b$, $a'>c'$, $c'<b'<d'$. If $\mid a-b'\mid >\mid a-a'\mid$, $\mid a-b' \mid>\mid b-b'\mid$, then $(b'-a')(b'+a')>2a(b'-a')$, $(a-b)(a+b)>2b'(a-b)$. If $a>b$ and $a'>b'$, then $c'<c$ and $a>b'$. If $\mid a-b'\mid >\mid c-c'\mid$, then $a-c>b'-c'$, contradiction; If $a<b$ and $a'>b'$ or $a>b$ and $a'<b'$, contradiction; If $a<b$, $a'<b'$, then $d'>d$, $a<b'$. If $\mid a-b'\mid >\mid d-d'\mid$, then $b'-d'>a-d$, contradiction.
\end{proof}

\begin{Lemma}
\label{provenesterminated}
$T$ is a contraction, i.e. $T$ has a fixed point.
\end{Lemma}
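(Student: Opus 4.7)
The plan is to use Lemma \ref{provenesterminatedpre} to show that successive iterates $\sigma_{k+1}=T\sigma_k$ contract geometrically in the sup norm, conclude that $\{\sigma_k\}$ is Cauchy on the finite-dimensional (hence complete) space of value functions over $\mathcal{V}(D)$, and deduce that its limit is a fixed point of $T$. Concretely, I would equip the value-function space with the norm $\|\sigma\|=\max_{G_i\in\mathcal{V}(D)}\max(|\sigma(G_i)^u|,|\sigma(G_i)^d|)$, which is complete because $\mathcal{V}(D)$ is finite.

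The key structural step is to translate Lemma \ref{provenesterminatedpre} into a bound on $\|\sigma_{k+1}-\sigma_k\|$. By the definition of \texttt{RefN}, for every edge $e=(G_i,G_j)$ with $L_{Type}(e)=t$ one has $L_{k+1}(e)-L_k(e)=\beta\cdot L_{TranP}(e)\cdot(Pp_k^t(G_j)-Pp_{k-1}^t(G_j))=\beta\cdot L_{TranP}(e)\cdot(\sigma_k(G_j)^t-\sigma_{k-1}(G_j)^t)$ component-wise. Since $L_{TranP}(e)\in[0,1]$, both components satisfy $\max_{e\in E(G_i)}|L_{k+1}(e)-L_k(e)|\leq\beta\|\sigma_k-\sigma_{k-1}\|$. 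Feeding this into Lemma \ref{provenesterminatedpre} for every $G_i$ and taking the maximum gives $\|\sigma_{k+1}-\sigma_k\|\leq\beta\|\sigma_k-\sigma_{k-1}\|$, hence $\|\sigma_{k+1}-\sigma_k\|\leq\beta^k\|\sigma_1-\sigma_0\|$.

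Because $\beta<1$, the tail sums are geometric, so $\{\sigma_k\}$ is Cauchy and converges to some $\sigma^*$; the identity $T\sigma^*=\sigma^*$ follows by passing to the limit in $\sigma_{k+1}=T\sigma_k$, provided $T$ is continuous at $\sigma^*$. In effect this realises $T$ as a (Banach-style) contraction along its orbit, which is what the lemma asserts.

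The main obstacle I anticipate is continuity of $T$ at $\sigma^*$. The update $T$ is defined through the Nash-equilibrium argmax inside \texttt{LocNs}, whose identity can jump as $\sigma$ varies; continuity of the u-coordinate of $T\sigma$ is automatic because it is a maximum of affine functions, but the d-coordinate is the d-weight evaluated at the u-argmax restricted to the equal-action bundle $h_i^t(e)$, which can a priori be discontinuous. I plan to handle this by arguing that near $\sigma^*$ the Cauchy estimate forces the differences $|L^u_{k}(e)-L^u_{k'}(e)|$ to vanish uniformly, so the u-argmax set stabilises and the d-selection becomes locally constant; combined with the affine dependence of $L_n(e)$ on $\sigma$, this yields continuity of $T$ at $\sigma^*$ and closes the argument.
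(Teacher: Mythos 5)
Your proof follows essentially the same route as the paper's: both derive $\|\sigma_{k+1}-\sigma_{k}\|_{\infty}\leq\beta\cdot\|\sigma_{k}-\sigma_{k-1}\|_{\infty}$ by combining Lemma~\ref{provenesterminatedpre} with the \texttt{RefN} update formula and $L_{\textit{TranP}}(e)\leq 1$, and then run the standard Banach-style geometric argument in the complete sup-norm space over the finite vertex set to obtain the fixed point $\sigma^{*}$. Your closing remarks on the continuity of $T$ at $\sigma^{*}$ (needed to pass to the limit in $\sigma_{k+1}=T\sigma_{k}$, since only a contraction estimate along the orbit is actually established) address a point the paper's proof silently elides, but this is a refinement of the same argument rather than a different approach.
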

\begin{proof}
For any real vector $\overrightarrow{x}\in \mathds{R}^J$, $J$ is an index set, let $\mid \mid \overrightarrow{x}\mid\mid_{\infty}=\max_{j}|x_{j}|$. According to Lemma \ref{provenesterminatedpre}, then
\begin{align*}
\mid\mid T\sigma_{k+1}^u-T\sigma_{k}^u\mid\mid_{\infty}&=\underset{G_{i}\in V}{\max}\mid T\sigma_{k+1}(G_{i})^u-T\sigma_{k}(G_{i})^u \mid\\
%&\leq\underset{G_{i}\in V}{\max}\underset{e_{\textit{ij}}\in E(G_{\textit{i}})}{\max} \mid (L^u_{\it WeiP})^{k+2}(e_{\textit{ij}})- (L^u_{\it WeiP})^{k+1}(e_{\textit{ij}})\mid\\
&\leq\underset{G_{j}\in V}{\max}\beta\cdot \mid   \sigma_{k+1}(G_{\textit{j}})^u-\sigma_{k}(G_{\textit{j}})^u \mid\\
&=\beta\cdot\mid\mid \sigma_{k+1}^u-\sigma_{k}^u\mid\mid_{\infty}
\end{align*}
similar proof for $\mid\mid T\sigma_{k+1}^d-T\sigma_{k}^d\mid\mid_{\infty}\leq\beta\cdot\mid\mid \sigma_{k+1}^d-\sigma_{k}^d\mid\mid_{\infty}$. As $\beta\in (0,1)$ and regardless of the initial value function $\sigma_{0}$, sequence $\sigma_{k}$ converges to a unique limit $\sigma^*$ with $T\sigma^*=\sigma^*$.
\end{proof}

\begin{Theorem}
\label{correctness}
\verb"FindNES()" finds all {\rm NESs} of {\rm SecModel}.
\end{Theorem}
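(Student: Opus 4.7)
The plan is to decompose \verb"FindNES()" into its four phases and verify that each preserves or correctly computes the set of Nash Equilibrium Strategies, then combine the facts. I will argue soundness of the preprocessing first, and reserve the hardest step for the fixed-point phase inside each SCC.

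First I would establish that the preprocessing steps do not alter the NES set. For \verb"Minimization()", Definition 2.2 gives $\mu(E,\alpha,C)=\mu(E',\alpha,C)$ on every equivalence class, so bisimilar states produce identical $L_{\textit{TranP}}$ and $L_{\textit{WeiP}}$ distributions into every class; hence any $t$-execution, and therefore any $t$-\textrm{NEE}, on the original transition graph lifts to a $t$-execution on the quotient with the same $PF^u, PF^d$ values, and vice versa. For \verb"Abstraction()", contracting a vertex-independent path concatenates deterministic intermediate transitions into a single edge whose combined $L_{\textit{TranP}}$ is the product and whose $L_{\textit{WeiP}}$ is the accumulated payoff, so Definitions 3.1--3.3 are invariant. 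For \verb"BayExp()", substituting $L_{\textit{WeiP}}^d(e)$ by $\sum_{t\in\textit{Ty}}\delta(t\mid u)\cdot f^d(s_{\it i},t,u,v)$ is exactly the defender's posterior expected payoff required by the NES condition under incomplete information (Harsanyi transformation).

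Next I would handle \verb"AlgNES()". After Tarjan's stratification, the SCC-DAG is finite and acyclic, so backward induction through the strata is well defined: every \textit{Leave} is resolved in parallel, and each \textit{NonLeave} is treated once its successors have been processed. The central task is proving that, inside a single SCC $D$, the pair \verb"RefN()" / \verb"LocNs()" computes precisely the NESs restricted to $D$. By Lemma \ref{provenesterminated}, $T$ is a sup-norm contraction, so the iterates $\sigma_k$ converge to a unique fixed point $\sigma^*$. For each $G_{\textit{i}}\in\mathcal{V}(D)$ and each $t\in\textit{Ty}$, \verb"LocNs()" selects every edge $e$ that jointly realizes $\arg\max_{e'\in h_{\it i}^t(e)}L_n^d(e')$ and $\arg\max_{e'\in E(G_{\it i})}L_n^u(e')$; at the fixed point this coincides with the Bellman condition of Definition 3.3 evaluated at $\sigma^*$. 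Assembling one such co-optimal edge per $(G_{\textit{i}},t)$ yields a spanning subgraph that meets both bullets of Definition 3.4 and whose every $t$-execution is a $t$-\textrm{NEE}, giving soundness. For \textit{NonLeaves} the argument is the standard finite dynamic-game backward induction of \cite{martin} applied on top of the subgame NESs already computed for the descendants.

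To finish I would prove completeness: given any NES $S^{\ast}$ of SecModel, its induced value map $\sigma_{S^{\ast}}$ satisfies $T\sigma_{S^{\ast}}=\sigma_{S^{\ast}}$ on each SCC, so by uniqueness of the fixed point $\sigma_{S^{\ast}}=\sigma^{\ast}$; consequently every edge of $S^{\ast}$ lying in $D$ is co-optimal under $\sigma^{\ast}$ and therefore is picked by some run of \verb"LocNs()", so $S^{\ast}$ appears among the strategies enumerated by \verb"AlgNES()". The main obstacle is precisely this last step: Lemma \ref{provenesterminated} pins down the value function but not the strategy, and multiple NESs arise from ties in the two argmax selections. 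I must therefore be careful that \verb"LocNs()" enumerates \emph{all} co-optimal edges rather than breaking ties, and that the enumeration is propagated consistently across strata so that cross-SCC combinations are not silently pruned; handling this combinatorial enumeration over the finite edge set, while keeping the inductive hypothesis clean between \textit{Leaves} and \textit{NonLeaves}, is where most of the bookkeeping will concentrate.
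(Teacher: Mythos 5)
Your proposal is correct in outline and, where it overlaps with the paper, follows the same skeleton: termination comes from the contraction Lemma~\ref{provenesterminated}, and the equilibria are assembled by backward induction over the Tarjan SCC-DAG with the \textit{Leaves} resolved first. But the two arguments diverge in substance. The paper's proof is purely a soundness argument at the \textit{Leaves}: it assumes the execution $\pi_{\textit{i}}^t(e)$ returned by \texttt{FindNES()} is not a NEE, exhibits a profitable deviation $e'$, and derives a contradiction with the argmax selection of \texttt{LocNs()}; \textit{NonLeaves} are dismissed as ``trivial'' and the preprocessing phases are not discussed at all. You instead (i) verify explicitly that \texttt{Minimization()}, \texttt{Abstraction()} and \texttt{BayExp()} preserve the NES set, (ii) establish soundness directly by identifying the fixed point $\sigma^*$ with the Bellman-style condition of the NEE definition, and (iii) supply a genuine completeness argument: any NES induces a value map fixed by $T$, hence equal to $\sigma^*$ by uniqueness, hence all of its edges are co-optimal and must be enumerated. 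Point (iii) is exactly the part of the claim (finding \emph{all} NESs) that the paper asserts but never argues, and the obstacle you flag --- that \texttt{LocNs()} as specified selects ``some edge'' rather than enumerating every tie, and that tie choices must be propagated consistently across strata --- is a real issue that the paper's contradiction argument silently steps over. In short, your route buys the completeness direction and the preservation lemmas at the cost of the extra bookkeeping you identify, while the paper's shorter route only establishes that what is returned is a NES.
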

\begin{proof}
We prove:
1. \verb"FindNES"() is terminated. It is trivial
by Lemma \ref{provenesterminated};
2. \verb"FindNES"() finds all NESs. We prove it backward inductively. If vertex $D$ is a $Leave$, for $\forall G_{i}\in \mathcal{V}(D)$, $\forall t\in {\it Ty}$, assuming $\pi_{\textit{i}}^t$ whose first edge is $e$ (written as $\pi_{\textit{i}}^t(e)$) is the execution obtained by \verb"FindNES"(). If $\pi_{\textit{i}}^t(e)$ is not \textrm{NEE} of $G_{\textit{i}}$, then there is $\pi_{\textit{i}}^{t}(e')$ with
$PF^d(\pi_{\textit{i}}^{t}(e'))> PF^d(\pi_{\textit{i}}^{t}(e))$, $L_{\textit{Act}}^u(e')= L_{\textit{Act}}^u(e)$ or
$PF^u(\pi_{\textit{i}}^{t}(e'))> PF^u(\pi_{\textit{i}}^{t}(e))$, $e'=\arg\underset{e''\in E(G_{\it i})}{\max} PF^d(\pi_{\textit{i}}^{t}(e''))$, contradiction.
If $D$ is a ${\it NonLeave}$, according to the definition of NES, trivial.
% If $G_{i}$ belongs to ${\it NonLeave}$, we prove inductively, which is trivial by the definition of NES.
\end{proof}

\section{Applications}
The efficiency of our approach is illustrated by two detailed examples.
All our experiments were carried out on 2.53 GHz i5 core computer with 4G RAM.

\subsection{Defense for DDoS Attacks}
This example is referred from the literature \cite{harkeerat} and it is usually modeled via incomplete information games.
It shows how to protect a network system from being attacked by DDoS \cite{david}.
%DDoS is the shorthand of Distributed Denial of Service and it is a kind of attack which is attempting to break down one or more network links by exhausting limited bandwidth \cite{david}.
In this case, the user can be a legitimate user who sends packages with normal service request or a zombie user who sends packages with fake IPs. The defender cannot distinct the rogue flow from the legitimate flow, so the defender's challenge is to determine the optimal firewall settings to block rogue traffics while allowing legitimate ones.
The legitimate user will try to make the most of the bandwidth to speed up his request to the server, while the malicious user will attempt to find the most effective sending rate and botnet size to exhaust the bandwidth without being detected.
It's necessary to model for all the possible interactions under different settings and find the most effective one.

There is one state $s_1$ in this example, so the defender will update continuously the judgement for the user's type under the  interactions repeatedly happened.
The type set {\it Ty}=$\{$ {\it Zombie}, {\it Regular}$\}$.
The Nature presumes these types with the same probability.
$A^u$=\{($r_{\it u}$, $m_{\it u}$) $\mid$ $r_{\it u}\in \mathds{R}^+$, $m_{\it u}\in \mathds{N}$\}.
For the {\it Zombie} user, $r_{\it u}$ and $m_{\it u}$ denote the zombie flow rate and the botnet size, while for the {\it Regular} user, $r_{\it u}$ and $m_{\it u}$ denote the request flow rate and the number of the request sent at a time.
$A^d$=\{${\it Mp}$ $\mid$ ${\it Mp}\in \mathds{R}^+$\}, ${\it Mp}$ denotes the parameter for the firewall's dropping rate.
We assume there are already $n$ legitimate flow with flow rate $r_{\it l}$ to be sent and the payoff of the defender is equivalent in the absolute value to the user's payoff.

For the \textit{Zombie} user, his immediate payoff is measured by the bandwidth occupied by the zombie flow ($f_{\it z}$), lost bandwidth  for the regular flow ($f_{\it l}$), and the cost to control the botnet ($f_{\it c}$). So we have ${\sf \dot{f}}^{\it u}(s_1,{\it Zombie}, m_{u},r_{u},{\it Mp})= f_{\it z}+f_{\it l}-f_{\it c}$. Let $\omega_{\it c}$ be a given coefficient, $f_{\it c}=m_{\it u}\times \omega_{\it c}$.

\begin{footnotesize}
\begin{align*}
f_{\it z}=& \frac{B\times m_{u}\times r_{\it u}^{'}}{m_{u}\times r_{\it u}^{'}+n\times r_{\it l}^{'}},~~~~
f_{\it l}= (1-\frac{B}{m_{u}\times r_{\it u}^{'}+n\times r_{\it l}^{'}})\times n\times r_{\it l}'
\end{align*}
\end{footnotesize}
$r_{\it u}'$ and $r_{\it l}'$ mean the flow rate considering the firewall's dropping rate modeled by a function $F(x)$ \cite{harkeerat}. Let $\rho$ is an empirically given scaling factor, $B$ is the network bandwidth.
\begin{footnotesize}
\begin{align*}
F(x)=&\frac{1}{1+e^{-\rho\times \frac{x-{\it Mp}}{B}}},~r_{\it l}^{'}=r_{\it l}\times (1-F(r_{\it l})),~
r_{\it u}^{'}=r_{\it u}\times (1-F(r_{\it u}))
\end{align*}
\end{footnotesize}

For the \textit{Regular} user, his payoff is measured by the number of his request flow arriving at the server.
$${\sf \dot{f}}^{\it u}(s_1, {\it Regular}, m_{\it u}, r_{\it u},{\it Mp})= m_{\it u}\times r_{\it u}' + n\times r_{\it l}'$$

In this example, the SecModel is a directed graph with parameter labels, so we use {\it MATLAB} to accomplish our algorithm and find NES.
We set $B=2000$Mbps, $n=20$, $r_{\it l}=60$, $\rho=-20$, $\omega_{\it c}=10$ and assume $m_{\it u}\times r_{\it u}=800$ for the $Regular$ user and $m_{\it u}\times r_{\it u}=5000$ for the $Zombie$ user.
The results obtained see Figure \ref{nes3} and Figure \ref{nes4}, respectively.

Figure \ref{nes3} shows the \textit{Regular} user will send 8 more flows at a time with flow rate 100 to access the server. The defender will set the midpoint of firewall to be 228.8.
In this setting, the legitimate user will make the most of the bandwidth (almost 1639.84Mbps) and the drop rate of the firewall is 0.2162 which will allow most of the flow to pass.

Figure \ref{nes4} shows the \textit{Zombie} user will set botnet size as 20 and the sending rate as 250.
The defender will set the firewall midpoint as 322.
In this setting, the zombie flows will exhaust the bandwidth ($f_z=$1500.83Mbps, $f_l=$619.35Mbps) and the drop rate is 0.3274 which could drop more flows to prevent the attack to some extent.
\begin{figure}[h]
\centering
\includegraphics[width=5cm, height=4cm]{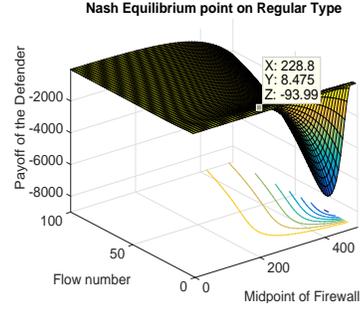}
\caption{NES on Regular type}
\label{nes3}
\end{figure}
\begin{figure}[h]
\centering
\includegraphics[width=5cm, height=4cm]{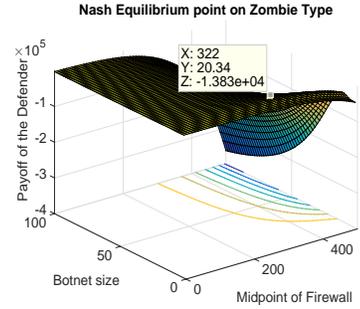}
\caption{NES on Zombie type}
\label{nes4}
\end{figure}

\subsection{Campus Network Defense}
This example is referred from the literature \cite{klye} and is usually modeled via complete information games.
It shows a campus network connected to the Internet (see Figure \ref{example1}), and the user is an attacker who tries to steal or damage some private information data.
It is necessary to find some effective defense deployment in advance by analyzing all possible offensive-defensive interactions.
The type set $\textit{Ty}=\{\textit{Malicious}\}$.
There are 18 states in this example given in Table \ref{state}.
$A^{u}$ and $A^{d}$ are shown in Table \ref{attckeraction} and \ref{defenderaction}, respectively.
We use symbolic number to represent corresponding actions, $\cdot$ means any action.
The transition probability ${\sf \dot{p}}$ is given in Table \ref{statetransition}.
The immediate payoff pair (${\sf \dot{f}}^u$, ${\sf \dot{f}}^d$) is shown in Table \ref{rewardcost}, where ${\sf \dot{f}}^u(s_{\it i})$ and ${\sf \dot{f}}^d(s_{\it i})$ are matrixes with $A^{u}(s_{\it i})$ as columns and $A^{d}(s_{\it i})$ as rows.

The model can be minimized as $s_{13}\sim s_{15}$, $s_{14}\sim s_{16}$ and $s_{17}\sim s_{18}$. Its SecModel sees Figure \ref{ComModelcasestudy}.
Two NESs obtained see Figure \ref{nash1} and Figure \ref{nash2}, respectively.
%They are largely similar except for a slight difference at $s_{5}$.
\begin{figure}[h]
\centering
\includegraphics[width=5cm, height=2.5cm]{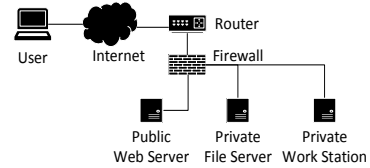}
\caption{Campus Network}
\label{example1}
\end{figure}
\begin{figure}[h]
\centering
\includegraphics[width=9cm, height=6cm]{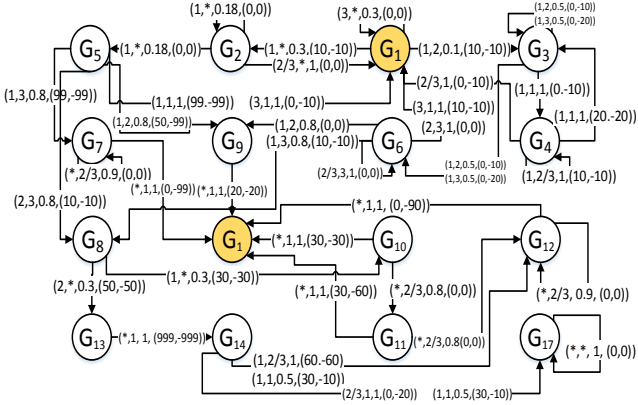}
\caption{SecModel of Campus Network}
\label{ComModelcasestudy}
\end{figure}
\begin{figure}[h]
\centering
\includegraphics[width=6.3cm, height=3.8cm]{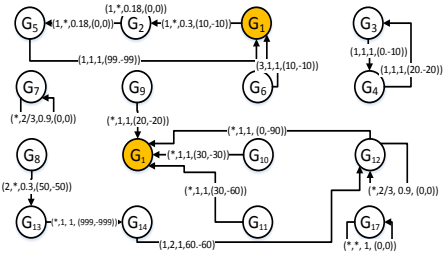}
\caption{The 1st NES}
\label{nash1}
\end{figure}
\begin{figure}[h!t]
\centering
\includegraphics[width=6.3cm, height=3.8cm]{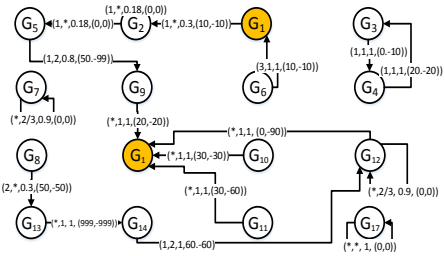}
\caption{The 2nd NES }
\label{nash2}
\end{figure}

The results are largely similar except for a slight difference at $s_{5}$. The first NES tells that for the attacker, even though installing a sniffer may allow him to crack a root password and eventually capture the data he wants, there is also the possibility that the defender will detect his presence and take preventive measures.
He is thus able to do more damages if he simply defaces the web site and leaves.
While for the defender, he should immediately remove the compromised account and restart httpd rather than continue to compete with the attacker.
The second NES shows that the defender should install a sniffer detector. This action can help the defender to further observe the attacker's final object before eventually removing the sniffer program and the compromised account.

Compared with the results obtained by game-theoretic approach\cite{klye}, we filter the invalid NES.
The invalid NES shows at $s_6$ the attacker will install a sniffer and the defender will remove the compromised account and restart ftpd.
However, there is no state transition based on this interaction, so it will never happen in the real if the players are rational.

\section{Conclusion}
We proposed an algebraic model based on a probabilistic extension of the  value-passing CCS, to model and analyze network security scenarios usually modeled via complete or incomplete information games.
Using the algorithm proposed, we computed multiple Nash Equilibria Strategies automatically.
The efficiency and effectiveness of our approach have been illustrated by two detailed applications.
We claimed and proved that our approach can be regarded as a uniform framework for modeling and analyzing the different network scenarios.

In the future, we wish to develop a security model based on CCS for Trees \cite{thomas} to analyze effective defense mechanisms under security scenarios with multiple users and defenders.
\section*{Acknowledgment}
This work has been partly funded by the French-Chinese project Locali (NSFC 61161130530 and ANR-11-IS02-0002) and by the Chinese National Basic Research Program (973) Grant No. 2014CB34030.

% trigger a \newpage just before the given reference
% number - used to balance the columns on the last page
% adjust value as needed - may need to be readjusted if
% the document is modified later
%\IEEEtriggeratref{8}
% The "triggered" command can be changed if desired:
%\IEEEtriggercmd{\enlargethispage{-5in}}

\bibliographystyle{IEEEtran}
\bibliography{reference}

\appendix

%\subsection{Tables}
As the limited space, we show partial experimental data of the second case study.
\begin{table}[h]
\scriptsize
\centering
\caption{\label{state}Part of State set}
\newcounter{Rownumber}
\newcommand{\Rown}{\stepcounter{Rownumber}\theRownumber}
\begin{tabular}{cc}\hline
State number & State name\\\hline
\Rown        & ${\it Normal\_operation}$\\
\Rown        & ${\it Httpd\_attacked}$\\
\Rown        & ${\it Ftp\_attacked}$\\
\Rown        & ${\it Ftpd\_attacked_detector}$\\
\Rown        & ${\it Httpd\_hacked}$\\
\Rown        & ${\it Ftpd\_hacked}$\\
%\Rown        & ${\it Website\_defaced}$\\
%\Rown        & ${\it Websever\_sniffer}$\\
%\Rown        & ${\it Websever\_sniffer\_detector}$\\
%\Rown        & ${\it Websever\_DOS\_1}$\\
%\Rown        & ${\it Websever\_DOS\_2}$\\
%\Rown        & ${\it Network\_shutdown}$\\
%\Rown        & ${\it Filesever\_hacked}$\\
%\Rown        & ${\it Filesever\_data\_stolen}$\\
%\Rown        & ${\it Workstation\_hacked}$\\
%\Rown        & ${\it Workstation\_data\_stolen\_1}$\\
%\Rown        & ${\it Filesever\_data\_stolen}$\\
%\Rown        & ${\it Workstation\_data\_stolen\_2}$\\
\hline
\end{tabular}
\end{table}

\begin{table}[h]
\scriptsize
\centering
\caption{\label{attckeraction} Part of Attacker's action set}
\newcounter{Rownumberb}
\newcommand{\Rownb}{\stepcounter{Rownumberb}\theRownumberb}
\begin{tabular}{cccc}\hline
$State~no.\backslash$  &  1      &  2       & 3         \\
Action no.             &         &          &           \\\hline
\Rownb             &${\it Attack\_httpd}$ &${\it Attack\_ftpd}$ &$\phi$\\
\Rownb             &${\it Continue\_attacking}$  &$\phi$ &  $\phi$\\
\Rownb             &${\it Continue\_attacking}$  &$\phi$ &  $\phi$\\
\Rownb             &${\it Continue\_attacking}$  &$\phi$ &  $\phi$\\
\Rownb       &${\it Deface\_website}$  &${\it Install\_sniffer}$&$\phi$\\
\Rownb       &${\it Install\_sniffer}$ &$\phi$     &   $\phi$\\
%\Rownb             & $\phi$  &  $\phi$  & $\phi$    \\
%\Rownb    & ${\it Run\_DOS\_virus}$  &  ${\it CrackFileS\_}$ & ${\it CrackWorkS\_}$\\
%                       & $ $ & $ rootpassword$& $rootpassword $\\
%\Rownb                 & $\phi$  &  $\phi$  & $\phi$    \\
%\Rownb                 & $\phi$  &  $\phi$  & $\phi$    \\
%\Rownb                 & $\phi$  &  $\phi$  & $\phi$    \\
%\Rownb                 & $\phi$  &  $\phi$  & $\phi$    \\
%\Rownb                 & ${\it Capture\_data}$  &  $\phi$  & $\phi$    \\
%\Rownb                 & ${\it Shutdown\_network}$  &  $\phi$  &$\phi$ \\
%\Rownb                 & ${\it Capture\_data}$  &  $\phi$  & $\phi$    \\
%\Rownb                 & ${\it Shutdown\_network}$  &  $\phi$  & $\phi$\\
%\Rownb                 & $\phi$  &  $\phi$  & $\phi$    \\
%\Rownb                 & $\phi$  &  $\phi$  & $\phi$    \\
\hline
\end{tabular}
\end{table}

\begin{table}[h]
\caption{\label{defenderaction} Part of Defender's action set}
\centering
\scriptsize
\newcounter{Rownumbera}
\newcommand{\Rowna}{\stepcounter{Rownumbera}\theRownumbera}
\renewcommand{\multirowsetup}{\centering}
\begin{tabular}{cccc}\hline
$State~no.\backslash$  &  1      &  2       & 3         \\
Action no.             &         &          &           \\\hline
\Rowna                 & $\phi$  &  $\phi$  & $\phi$    \\
\Rowna                 & $\phi$  &  $\phi$  & $\phi$    \\
\Rowna                 & ${\it InstallSnifferDetctor}$ &$\phi$ &$\phi$\\
\Rowna                 & ${\it RemoveSnifferDetctor}$  &$\phi$ &$\phi$\\
\Rowna    &${\it RemoveComAccount}$ &${\it InstallSnifferDetector}$  &   $\phi$\\
\Rowna    & ${\it RemoveComAccount}$ &${\it InstallSnifferDetector}$  &   $\phi$\\
%\Rowna        & ${\it RestoreWebsite}$  &  $\phi$     &   $\phi$\\
%\Rowna               & $\phi$  &  $\phi$     &   $\phi$\\
%\Rowna           & ${\it RemoveSniffer}$  & $\phi$ &   $\phi$\\
%\Rowna                 & ${\it RemoveVirus}$  &  $\phi$     &   $\phi$ \\
%\Rowna                 & ${\it RemoveVirus}$  &  $\phi$     &   $\phi$\\
%\Rowna                 & ${\it RemoveVirus}$  &  $\phi$     &   $\phi$\\
%\Rowna                 & $\phi$  &  $\phi$     &   $\phi$\\
%\Rowna                 & ${\it RemoveSniffer}$  &  $\phi$    &   $\phi$\\
%\Rowna                 & $\phi$  &  $\phi$     &   $\phi$\\
%\Rowna                 & ${\it RemoveSniffer}$  &  $\phi$    &   $\phi$\\
%\Rowna                 & $\phi$  &  $\phi$     &   $\phi$\\
%\Rowna                 & $\phi$  &  $\phi$     &   $\phi$\\
\hline
\end{tabular}
\end{table}

\begin{table}[h]
\centering
\scriptsize
\caption{\label{statetransition} Part of Transition probabilities}
\begin{tabular}{lll}\hline
$\begin{bf}State~1\end{bf}$         &$\begin{bf}State~2\end{bf}$        &$\begin{bf}State~3\end{bf}$\\
${\sf \dot{p}}(s_1,1,\cdot,s_2)=1/3$                        &${\sf \dot{p}}(s_2,1,\cdot,s_2)=0.5/3$                         &${\sf \dot{p}}(s_3,1,2, s_3)=0.5$\\
${\sf \dot{p}}(s_1,1,2, s_3)=1/3$                          &${\sf \dot{p}}(s_2,1,\cdot, s_5)=0.5/3$                         &${\sf \dot{p}}(s_3,1,3, s_3)=0.5$\\
${\sf \dot{p}}(s_1,3,\cdot, s_1)=1/3$                      &${\sf \dot{p}}(s_2,2,\cdot, s_1)=1$                             &${\sf \dot{p}}(s_3,1,2, s_6)=0.5$\\
$ $                &${\sf \dot{p}}(s_2,3,\cdot, s_1)=1$             &${\sf \dot{p}}(s_3,1,3, s_6)=0.5$\\
$ $                &$ $             &${\sf \dot{p}}(s_3,1,1, s_4)=1$\\
$\begin{bf}State~4\end{bf}$         &$\begin{bf}State~5\end{bf}$        &$\begin{bf}State~6\end{bf}$\\
${\sf \dot{p}}(s_4,2,1, s_1)=1$                &${\sf \dot{p}}(s_5,1,3,s_7)=0.8$             &${\sf \dot{p}}(s_6,1,3, s_8)=0.8$\\
${\sf \dot{p}}(s_4,3,1, s_1)=1$                &${\sf \dot{p}}(s_5,2,3, s_8)=0.8$             &${\sf \dot{p}}(s_6,1,2, s_9)=0.8$\\
${\sf \dot{p}}(s_4,1,1, s_3)=1$                &${\sf \dot{p}}(s_5,1,2, s_9)=0.8$             &${\sf \dot{p}}(s_6,2,3 ,s_1)=1$\\
${\sf \dot{p}}(s_4,1,2, s_4)=1$                &${\sf \dot{p}}(s_5,3,1, s_1)=1$             &${\sf \dot{p}}(s_6,3,1, s_1)=1$\\
${\sf \dot{p}}(s_4,1,3, s_4)=1$                &${\sf \dot{p}}(s_5,1,1, s_1)=1$             &${\sf \dot{p}}(s_6,2,3 ,s_6)=1$\\
$ $                &$ $             &${\sf \dot{p}}(s_6,3,3, s_6)=1$\\
\hline
\end{tabular}
\end{table}

\begin{table}[h]
\scriptsize
\centering
\caption{\label{rewardcost} Part of Weight pair of each transition}
\begin{tabular}{ll}\hline
${\sf \dot{f}}^u(s_1)=\begin{bmatrix}10&10&10 \\10&10&10 \\0&0&0\end{bmatrix}$           &${\sf \dot{f}}^d(s_1)=-{\sf \dot{f}}^u(s_1)$  \\
${\sf \dot{f}}^u(s_2)=\begin{bmatrix}0&0&0 \\0&0&0 \\0&0&0\end{bmatrix}$                 &${\sf \dot{f}}^d(s_2)={\sf \dot{f}}^u(s_2)$   \\
${\sf \dot{f}}^u(s_3)=\begin{bmatrix}0&0&0 \\0&0&0 \\0&0&0\end{bmatrix}$                 &${\sf \dot{f}}^d(s_3)=\begin{bmatrix}-10&-10&-20 \\-10&-10&0 \\-10&-10&0\end{bmatrix}$\\
${\sf \dot{f}}^u(s_4)=\begin{bmatrix}20&10&10 \\0&0&0 \\0&0&0\end{bmatrix}$              &${\sf \dot{f}}^d(s_4)=\begin{bmatrix}-20&-10&-10 \\-10&0&0 \\-10&0&0\end{bmatrix}$\\
${\sf \dot{f}}^u(s_5)=\begin{bmatrix}99&50&99 \\10&0&10 \\0&10&0\end{bmatrix}$          &${\sf \dot{f}}^d(s_5)=\begin{bmatrix}-99&-99&-99 \\10&10&-10 \\-10&-10&0\end{bmatrix}$\\
${\sf \dot{f}}^u(s_6)=\begin{bmatrix}0&0&10 \\10&0&0 \\10&0&0\end{bmatrix}$          &${\sf \dot{f}}^d(s_6)=-{\sf \dot{f}}^u(s_6)$\\
\hline
\end{tabular}
\end{table}

\end{document}